\newcommand{\E}{{\mathbb E}}
\newcommand{\F}{{\mathbb F}}
\renewcommand{\P}{{\mathbb P}}
\newcommand{\R}{{\mathbb R}}
\newcommand{\Fcal}{{\mathcal F}}
\newcommand{\Hcal}{{\mathcal H}}
\newtheorem{theorem}{Theorem}
\newtheorem{lemma}[theorem]{Lemma}
\newtheorem{remark}[theorem]{Remark}
\numberwithin{equation}{section}
\numberwithin{theorem}{section}
\begin{document}

\title{The space of outcomes of semi-static trading strategies\\ need not be closed}
\author{Beatrice Acciaio\thanks{Department of Statistics, London School of Economics and Political Science, 10 Houghton St, WC2A 2AE London, UK, b.acciaio@lse.ac.uk.} \and Martin Larsson\thanks{Department of Mathematics, ETH Zurich, R\"amistrasse 101, CH-8092, Zurich, Switzerland, martin.larsson@math.ethz.ch. Financial support by the Swiss National Science Foundation (SNF) under grant 205121\textunderscore163425 is gratefully acknowledged.} \and Walter Schachermayer\thanks{Fakult\"at f\"ur Mathematik, Universit\"at Wien, Oskar-Morgenstern-Platz 1, A-1090 Wien, walter.schachermayer@univie.ac.at and the Institute for Theoretical Studies, ETH Zurich. Partially supported by the Austrian Science Fund (FWF) under grant P25815, the Vienna Science and Technology Fund (WWTF) under grant MA09-003 and Dr.~Max R\"ossler, the Walter Haefner Foundation and the ETH Zurich Foundation.}}



\maketitle

\begin{abstract}
Semi-static trading strategies make frequent appearances in mathematical finance, where dynamic trading in a liquid asset is combined with static buy-and-hold positions in options on that asset. We show that the space of outcomes of such strategies can have very poor closure properties when all European options for a fixed date $T$ are available for static trading. This causes problems for optimal investment, and stands in sharp contrast to the purely dynamic case classically considered in mathematical finance.
\end{abstract}

\section{Introduction and main results}

Given a local martingale $S$ and a finite stopping time $T$ defined on a stochastic basis $(\Omega,\Fcal,\F,\P)$ in discrete or continuous time, we consider outcomes at time $T$ of semi-static trading in $S$. More specifically, we consider self-financing dynamic trading in $S$ and a risk-free asset with zero interest rate, combined with static (buy-and-hold) positions in arbitrary European options written on the final value $S_T$. Such outcomes are of the form $(H\cdot S)_T+h(S_T)$, where $H\cdot S$ denotes the stochastic integral of the $S$-integrable process $H$ with respect to~$S$, and $h$ is a measurable function satisfying some integrability conditions. The semi-static strategy consists of the pair $(H,h)$, chosen by the investor. This type of semi-static trading strategies has been used extensively in the literature; see e.g.~\cite{Ho11,HK12,BHP13,GHT14,DS14} and the references therein. A key reason is that the collection of time zero prices of all such static claims pins down the law of $S_T$ under $\P$, if $\P$ is the pricing measure.

One could also restrict the static component $h(S_T)$ to lie in a given finite-dimensional set of available options, for instance $h(S_T)=a_0+a_1C_1(S_T)+\cdots+a_nC_n(S_T)$, where $C_i(S_T)=(S_T-K_i)_+$ is a vanilla call payoff with given strike $K_i$, and $a_0,\ldots,a_n\in\R$ are chosen by the investor. Such a setup is also common in the literature; see e.g.~\cite{DH07,ABPS13,BN15}. It is however different from our setting, where $h$ is chosen from an infinite-dimensional space of measurable functions. One of the main purposes of this paper is to clarify the sharply different properties that the two situations may exhibit.

The largest reasonable space of outcomes of semi-static trading strategies is arguably the sum $U+V=\{u+v\colon u\in U, \, v\in V\}$, where
\begin{align*}
U &= \{ (H\cdot S)_T \colon \text{$H$ is $S$-integrable and $H\cdot S$ is a supermartingale on $[0,T]$}\}, \\
V &= L^1(\Omega,\sigma(S_T),\P).
\end{align*}
The supermartingale property of the gains processes $H\cdot S$ is a weak restriction which is implied by any reasonable admissibility or integrability condition that excludes doubling strategies (recall that $S$ from the outset is assumed to be a local martingale). Requiring the static component to be integrable, rather than just measurable, corresponds to a finite initial capital requirement: If an outcome $f=(H\cdot S)_T+h(S_T)$ is integrable---which we interpret as requiring finite initial capital---and if $(H\cdot S)_T\in U$, then $h(S_T)$ is necessarily integrable as well.

On the other hand, the smallest reasonable space of outcomes (at least in our setting without trading constraints) is arguably the sum $U_\infty+V_\infty$, where
\begin{align*}
U_\infty &= \{ (H\cdot S)_T \colon \text{$H$ is $S$-integrable and $H\cdot S$ is a bounded martingale}\}, \\
V_\infty &= L^\infty(\Omega,\sigma(S_T),\P).
\end{align*}
In particular, the dynamic components of such semi-static trading strategies clearly satisfy all admissibility and integrability conditions that have been considered in the literature to date.

The spaces $U$ and $V$ enjoy very strong closure properties. For $V$ this is obvious; for $U$ much less so. Kunita and Watanabe \cite{KW67} proved early on that if $H^n\cdot S$ is a sequence of $\Hcal^2$ martingales such that $(H^n\cdot S)_T\to f$ in $L^2$ for some limit $f$, then the limit is again of the form $f=(H\cdot S)_T$, where $H\cdot S$ is an $\Hcal^2$ martingale; see e.g.~\cite[Theorem~IV.41]{P05}. The same result holds in the $\Hcal^p$ and $L^p$ case for any $p\in(1,\infty]$, and Yor proved that the statement also remains true under uniform (rather than $\Hcal^2$) integrability and $L^1$ (rather than $L^2$) convergence; see \cite{Yo78} and \cite{DS99} for further discussion. In a similar vein, the following result is crucial for the development of arbitrage theory in mathematical finance: if $u_n\in U$, $u_n\ge-1$, and $u_n\to f$ in probability for some random variable $f$, then $f\in U-L^0_+$. That is, $f$ is dominated by some element of~$U$. Further discussion and generalizations can be found e.g.~in \cite{DS94} and \cite{DS98}. Note that these results imply in particular that $U_\infty$ is closed in $L^\infty$, that its closure in any $L^p$ space ($p\ge 1$) is contained in $U$, and that its closure in $L^0$ is contained in $U-L^0_+$.

A natural question is to what extent these closure properties carry over to the spaces $U+V$ and $U_\infty+V_\infty$ of outcomes of semi-static trading strategies. The answer is that they do not. The goal of the present paper is to demonstrate this by way of example. This is done in our two main results, Theorems~\ref{T:Lp} and~\ref{T:Lp_cont}, which cover the discrete and continuous time cases, respectively.

\begin{theorem} \label{T:Lp}
There exists a discrete time stochastic basis $(\Omega,\Fcal,(\Fcal_t)_{t\in\{0,1,2\}},\P)$ with countable sample space $\Omega$, equipped with a bounded martingale $S=(S_t)_{t\in\{0,1,2\}}$ such that the following holds: There exist random variables $g$ and $g_m$, $m\ge1$, such that
\begin{enumerate}
\item\label{T:Lp:1} $g_m\in U_\infty+V_\infty$ and $g_m\ge 0$ for each $m$,
\item\label{T:Lp:2} $g_m \to g$ almost surely and in $L^p$ for every $p\in[1,\infty)$,
\item\label{T:Lp:3} $g \notin U+V-L^0_+$.
\end{enumerate}
\end{theorem}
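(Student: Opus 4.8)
The plan is to build an explicit countable-state example in three time steps where a carefully engineered sequence of bounded semi-static outcomes converges to a limit $g$ that cannot be dominated by any element of $U+V$. The key tension to exploit is that the static component $h(S_T)$ only depends on $S_T=S_2$, so it is ``blind'' to the filtration at time $1$; the only way to transfer information between different values of $S_2$ is through the dynamic integral $(H\cdot S)_T$. I would arrange the state space so that $S_2$ takes countably many values $\{s_k\}_{k\ge 1}$ (say $s_k\to s_\infty$), and so that on the event $\{S_1 = \text{some fixed level}\}$ the conditional law of $S_2$ spreads out over infinitely many of these $s_k$. Then $(H\cdot S)_2$ restricted to that event is a single random variable obtained by integrating $H_2$ against the martingale increment $S_2-S_1$, which couples the values $g(s_k)$ in a way a pure function of $S_2$ cannot.

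First I would fix the tree: let $\Fcal_1$ be generated by a partition into branches indexed so that one distinguished branch $B$ further splits at time $2$ into infinitely many leaves with $S_2=s_k$, while the martingale property forces a linear constraint $\sum_k p_k (s_k - S_1(B)) = 0$ on that branch. Second, I would choose the approximating outcomes $g_m$ of the form $g_m = (H^m\cdot S)_2 + h_m(S_2)$ with $H^m$ bounded and supported on the branch $B$ (so $(H^m\cdot S)_2\in U_\infty$) and $h_m$ bounded; the idea is to let $h_m$ approximate the ``target profile'' on $\{s_1,\dots,s_m\}$ and use the dynamic part only to enforce the budget/martingale constraint, truncating so that $g_m\ge 0$. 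As $m\to\infty$ the dynamic corrections must blow up in a controlled way — this is where the choice of $s_k$, $p_k$ and the growth of the target profile has to be tuned so that $g_m\to g$ a.s. and in every $L^p$, $p<\infty$, yet the ``cost'' of realizing $g$ exactly becomes infinite.

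Third, and this is the crux, I would prove the non-domination statement \eqref{T:Lp:3}. Suppose for contradiction $g \le (H\cdot S)_2 + v(S_2)$ with $H\cdot S$ a supermartingale on $[0,T]$ and $v(S_2)\in L^1$. On the branch $B$, conditioning on $\Fcal_1$, the supermartingale property gives $\E[(H\cdot S)_2 \mid \Fcal_1]\le (H\cdot S)_1$, a finite number, so $\E[\,(H\cdot S)_2\mathbf{1}_B\,]<\infty$ (after adjusting by the $\Fcal_1$-measurable, hence integrable, value); meanwhile $\E[v(S_2)\mathbf{1}_B] = \sum_k p_k v(s_k)$ converges absolutely. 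Hence $\E[g\,\mathbf{1}_B] = \sum_k p_k\, g(s_k) < \infty$. The contradiction comes from choosing $g$ on $B$ so that this weighted sum diverges — e.g. $g(s_k)$ growing fast enough relative to $1/p_k$ — while still arranging the a.s./$L^p$ convergence $g_m\to g$ on the rest of the space (on $B$ the $g_m$ need only converge in the weighted-$\ell^1$ sense dictated by $p_k$, not absolutely, which is consistent because $g_m$ is only required to be in $L^p$, $p<\infty$, not $L^\infty$ uniformly). I expect the delicate part to be the simultaneous bookkeeping: making $g_m\ge 0$ and bounded for each fixed $m$, getting $g_m\to g$ in $L^p$ for all finite $p$, and yet ensuring $g\notin L^1(B\text{-part})$ — the exponents have to be balanced so that $g\in L^p$ for no... wait, $g$ need not even be integrable — so the real budget to manage is that $\sum_k p_k g(s_k)^p<\infty$ for all $p<\infty$ while $\sum_k p_k g(s_k)=\infty$; a profile like $g(s_k)=1/(k p_k)$ with $p_k$ decaying like $c^k$ (so $g(s_k)$ grows geometrically) does exactly this, since raising a geometric sequence to a power $p$ against a geometric weight still converges once the base is chosen correctly. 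I would finish by verifying $H^m\cdot S$ is a genuine bounded martingale (automatic since $H^m$ is bounded and supported on finitely-many-step structure) and that the truncations defining $g_m$ do not destroy the limit.
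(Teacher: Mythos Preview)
Your contradiction argument for part~(iii) cannot work as stated. You want to arrange $\E[g\,\bm 1_B]=\sum_k p_k\,g(s_k)=\infty$ and then derive a contradiction from a hypothetical decomposition $g\le u+v$ by showing it would force $\E[g\,\bm 1_B]<\infty$. But the theorem itself requires $g_m\to g$ in $L^1$ (this is the case $p=1$ of part~(ii)), hence $g\in L^1$, hence $\E[g\,\bm 1_B]<\infty$ holds unconditionally --- there is nothing to contradict. Your own target condition ``$\sum_k p_k g(s_k)^p<\infty$ for all $p<\infty$ while $\sum_k p_k g(s_k)=\infty$'' is self-contradictory for nonnegative $g$: taking $p=1$ in the first statement gives exactly the negation of the second. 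The concrete profile $g(s_k)=1/(kp_k)$ with $p_k=c^k$ fails for every $p>1$ as well, since $\sum_k p_k g(s_k)^p=\sum_k c^{k(1-p)}/k^p$ diverges (the base $c^{1-p}$ exceeds~$1$).

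The idea you are missing is that the obstruction to $g\le u+v$ is not the size of $g$ but the size of the \emph{pieces}: one must build $g\in\bigcap_{p<\infty} L^p$ such that any $u\in U$, $v\in V$ with $g\le u+v$ are forced to satisfy $\|u\|_1+\|v\|_1=\infty$, even though $\|g\|_1<\infty$. The paper achieves this by first constructing, for each $n$, a finite model carrying a nonnegative $f_n\in U_\infty+V_\infty$ with $\|f_n\|_p$ small but with every dominating pair satisfying $\|u\|_1+\|v\|_1\ge M_n/16$. The mechanism is approximate orthogonality: $f_n$ is built so that $\E[f_n u]\le \varepsilon_n M_n\|u\|_1$ for all $u\in U$ and similarly for $V$, while $\E[f_n^2]=\varepsilon_n M_n^2/2$; combining these via $f_n^2\le f_n u+f_n v$ yields the lower bound. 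Countably many such models (with $M_n=2^n$, $\varepsilon_n=2^{-n^2}$, and $S_2$ taking distinct values on each copy so that the copy index is $\sigma(S_2)$-measurable) are then pasted together; $g=\sum_n f_n\bm 1_{\Omega_n}$ lies in every $L^p$, but any global $u,v$ restrict to dominating pairs on each $\Omega_n$, whence $\|u\|_1+\|v\|_1\ge\sum_n 2^{-n}M_n/16=\infty$. Your single-branch scheme does not create this tension between a small $g$ and mandatorily large components.
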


Thus, the nonnegative random variables $g_m$ are final outcomes of semi-static trading strategies of the most well-behaved kind: their dynamic and static components are both bounded. In particular, the dynamic trading strategies are admissible in the classical sense. Furthermore, the random variables $g_m$ converge to a limit $g$ in a rather strong sense, but this limit cannot be represented as, and not even dominated by, the final outcome of any semi-static trading strategy satisfying minimal regularity conditions. As will become clear from the construction, each $g_m$ can be viewed as a portfolio of digital options, hedged by a position in the underlying stock; see Remark~\ref{R:interpretation} in Section~\ref{S:T:Lp:a}.

To prove Theorem~\ref{T:Lp} we construct final outcomes $g_m$ converging to an integrable limit $g$ which, if it were to have a representation $g\le u+v$ with $u\in U$ and $v\in V$, would violate the simple bound $\|u\|_1+\|v\|_1<\infty$. To achieve this, we construct a sequence of simpler models, each of which admits an element of $U_\infty+V_\infty$ whose $L^p$ norms are small, but whose components in $U_\infty$ and $V_\infty$ are nonetheless large in $L^1$. These models are then pasted together to form a new model, which admits the required sequence of elements $g_m$. The individual models are described in Section~\ref{S:T:Lp:a}, and the pasting procedure is described in Section~\ref{S:T:Lp:b}.

\begin{remark}
Let us mention a conceivable extension of Theorem~\ref{T:Lp}: Is it possible to strengthen part \ref{T:Lp:2} of Theorem~\ref{T:Lp} so that $g_m\to g$ in $L^\infty$? We do not know the answer.
\end{remark}

We emphasize that there is nothing special about discrete time that makes Theorem~\ref{T:Lp} work. An analogous example may be constructed in a basic continuous-time setting, as the following result shows.

\begin{theorem} \label{T:Lp_cont}
There exists a stochastic basis $(\Omega,\Fcal,\F,\P)$ equipped with a Brownian motion $W$ and a stopping time $T$ such that the following holds for the price process $S=W^T$: There exist random variables $g$ and $g_m$, $m\ge1$, such that
\begin{enumerate}
\item\label{T:Lp:1} $g_m\in U_\infty+V_\infty$ and $g_m\ge 0$ for each $m$,
\item\label{T:Lp:2} $g_m \to g$ almost surely and in $L^p$ for every $p\in[1,\infty)$,
\item\label{T:Lp:3} $g \notin U+V-L^0_+$.
\end{enumerate}
Furthermore, $S$ is uniformly bounded.
\end{theorem}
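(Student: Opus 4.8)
The plan is to reduce Theorem~\ref{T:Lp_cont} to the discrete-time construction of Theorem~\ref{T:Lp} by embedding the discrete model into a Brownian filtration via a Skorokhod-type stopping time. Concretely, let $(\Omega',\Fcal',(\Fcal'_t)_{t\in\{0,1,2\}},\P')$ and $S'=(S'_t)_{t\in\{0,1,2\}}$ be the discrete martingale produced in Theorem~\ref{T:Lp}, with the associated $g'$ and $g'_m$. Since $\Omega'$ is countable and $S'$ is bounded, one can realise $S'$ as a time-changed Brownian motion: on a suitable stochastic basis carrying a Brownian motion $W$, first choose a stopping time $\tau_1$ with $W_{\tau_1}\stackrel{d}{=}S'_1$ (Skorokhod embedding of the law of $S'_1$, which has bounded support), and then, conditionally on $\Fcal_{\tau_1}$, choose a further stopping time $\tau_2\ge\tau_1$ with $W_{\tau_2}\stackrel{d}{=}S'_2$ given $S'_1=W_{\tau_1}$; because the conditional laws are finitely supported this is a finite, a.s.\ bounded construction (e.g.\ by the Chacon--Walsh / Dubins scheme on finitely many atoms). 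One may, by a deterministic time change, assume $\tau:=\tau_2\le 1$; set $S=W^\tau$, which is then uniformly bounded. The filtration $\F$ is the (augmented) Brownian filtration, and $T=\tau$.

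Next I would transport the trading strategies. The key point is that the stochastic integral representation survives the embedding: if $\phi$ is any martingale on the discrete basis adapted to $(\Fcal'_t)$ with $\phi_0=0$, then $\phi_t = (\widehat H\cdot S')_t$ for the predictable integrand $\widehat H$ obtained by martingale representation with respect to $S'$ (available because, at each node, $S'$ generates the relevant conditional structure — this is exactly what the construction in Section~\ref{S:T:Lp:a} arranges); pulling this back, the Brownian martingale $M_t=\E[\phi_2\mid\Fcal_t]$ is of the form $M=\widetilde H\cdot W$ for a predictable $\widetilde H$ supported on $[0,\tau]$, hence $M=\widetilde H\cdot S$. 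Consequently the discrete-time outcomes $g'_m=(H^m\cdot S')_2+h_m(S'_2)\in U_\infty+V_\infty$ map to $g_m=(\widetilde H^m\cdot S)_\tau+h_m(S_\tau)$, still with a bounded martingale dynamic part and a bounded static part, so $g_m\in U_\infty+V_\infty$; and $\sigma(S_\tau)=\sigma(S'_2)$ matches the static trading set. Part (i) and part (ii) — a.s.\ and $L^p$ convergence of $g_m$ to $g:=$ image of $g'$ — transfer verbatim, since the embedding is measure-preserving in the appropriate sense. For part (iii), suppose toward a contradiction $g\le u+v$ with $u=(H\cdot S)_T\in U$ (a supermartingale) and $v\in L^1(\sigma(S_T))$. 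Stopping $H\cdot S$ at the finitely many ``levels'' corresponding to the discrete times and taking conditional expectations reverses the embedding: $u$ projects to a supermartingale $u'=(H'\cdot S')\in U$ on the discrete basis with $u'_0\ge u_0$, and $v$ descends to $v'\in L^1(\sigma(S'_2))=V$, yielding $g'\le u'+v'-$ (something nonnegative), i.e.\ $g'\in U+V-L^0_+$, contradicting Theorem~\ref{T:Lp}(iii).

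The main obstacle I anticipate is the bookkeeping around the martingale representation and the ``descent'' argument in part (iii): one must be careful that the supermartingale $H\cdot S$ on $[0,T]$ — which need not be a martingale, need not be bounded, and lives in continuous time — projects cleanly onto a genuine supermartingale of the form $H'\cdot S'$ on the three-point grid, with no initial-capital leakage. The right way to handle this is to observe that $W^\tau$ is constant on the stochastic intervals between $\tau_1$ and $\tau_2$ etc., so that optional sampling at $\tau_1$ and $\tau_2$ produces the discrete supermartingale $(0,\E[u\mid\Fcal_{\tau_1}],\E[u\mid\Fcal_{\tau_2}])$ whose increments are automatically stochastic integrals of $S'$ by the (finite-dimensional, hence trivial) representation property at each node; the supermartingale inequality is preserved by optional sampling for supermartingales bounded below, and $u\ge -\,\|v\|_\infty' $-type controls can be arranged since we only need the conclusion up to $-L^0_+$. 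A secondary, more routine point is checking that one can force $\tau\le 1$ and that $S=W^\tau$ is uniformly bounded — this follows because the law of $S'_2$ has bounded support and a Skorokhod embedding into bounded support can be chosen with an a.s.\ bounded (indeed, by a deterministic rescaling, $\le 1$) stopping time. Everything else — (i), (ii), and the matching of the static $\sigma$-algebras — is then essentially immediate from the construction.
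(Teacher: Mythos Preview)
Your reduction-by-embedding idea is natural, but the descent argument for part~(iii) has a genuine gap, and it stems from a misreading of the discrete construction. You write that ``$S'$ generates the relevant conditional structure --- this is exactly what the construction in Section~\ref{S:T:Lp:a} arranges''. In fact the construction arranges the \emph{opposite}: at time~$1$ the discrete filtration is $\Fcal'_1=\sigma(S'_1,X)$, strictly larger than $\sigma(S'_1)$, and this extra randomness $X$ is precisely what drives Lemma~\ref{L:foU+V}. Your Skorokhod embedding realises only the pair $(S'_1,S'_2)$; the Brownian $\sigma$-algebra $\Fcal_{\tau_1}$ contains no analogue of $X$ (in your model the event $\{|W_{\tau_2}|=a\}$ is only $\Fcal_{\tau_2}$-measurable), so there is no well-defined map from the Brownian filtration at $\tau_1$ to the discrete $\Fcal'_1$. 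Concretely, your ``trivial representation property at each node'' fails at the node $\{W_{\tau_1}=-1\}$: there $W_{\tau_2}$ takes four values $\{\pm a,\pm b\}$, so the space of conditionally mean-zero increments is three-dimensional while the span of $S'_2-S'_1$ is one-dimensional. Hence the projected process $\big(0,\,\E[u\mid W_{\tau_1}],\,\E[u\mid W_{\tau_1},W_{\tau_2}]\big)$ is a martingale but in general \emph{not} of the form $H'\cdot S'$, and the descent to the discrete $U$ does not go through. (A minor side issue: a deterministic time change does not preserve the Brownian property, so you cannot force $\tau\le 1$ that way --- but the theorem does not require $T$ to be bounded, so this is irrelevant.)

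The paper avoids this difficulty by \emph{not} reducing to the discrete case. Instead it rebuilds the individual models directly in continuous time (Lemma~\ref{L:t012 n_cont}): one takes $\sigma=\inf\{t:|W_t|=1\}$, \emph{enlarges} the Brownian filtration at time $\sigma$ by an independent Bernoulli variable $X$, and then sets $T=\inf\{t\ge\sigma:|W_t|=a\bm 1_{\widetilde A}+b\bm 1_{\widetilde A^c}\}$ with $\widetilde A=\{W_\sigma=1\}\cup\{X=1\}$. Because $X$ is $\Fcal_\sigma$-measurable in this enlarged filtration, $f=MX\bm 1_{\{W_\sigma=-1\}}$ is $\Fcal_\sigma$-measurable, and the almost-orthogonality estimates of Lemma~\ref{L:foU+V} carry over verbatim (Lemma~\ref{L:foU+V_cont}); the pasting of Section~\ref{S:T:Lp:b} is then unchanged. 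If you want to salvage your embedding approach, you would effectively have to reproduce this filtration enlargement inside the Brownian model and redo the orthogonality estimate there --- at which point you are no longer ``reducing to Theorem~\ref{T:Lp}'' but reproving the continuous lemma from scratch.
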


The proof follows the pattern of Theorem~\ref{T:Lp}. The only difference lies in the construction of the individual models, which is presented in Section~\ref{S:T:Lp_cont}. The pasting procedure then works exactly as described in Section~\ref{S:T:Lp:b}, and we refrain from repeating it.

A simple corollary of the above theorems is that the spaces
\[
\left\{(H\cdot S)_T + h(S_T)\colon H\cdot S \text{ is an $\Hcal^p$ martingale}, \, h(S_T)\in L^p(\sigma(S_T))\right\}
\]
need not be closed in $L^p$ ($p\ge1$). The closure of the corresponding space in the case $p=2$ but with finitely many static claims was crucial for the semi-static Jacod-Yor theorem in~\cite{AL:2016}. Thus, we do not expect that result to carry over to the case of infinitely many static claims. The non-closedness of the above spaces is an example of the well-known fact that sums of closed subspaces of Banach spaces need not be closed; see e.g.~Section~41 in~\cite{Halmos:1974}.

Another immediate corollary is that the space
\[
\left\{(H\cdot S)_T + h(S_T) - f\colon \text{$H$ is $1$-admissible}, \, h(S_T)\in L^1(\sigma(S_T)), \, f\in L^0_+\right\}
\]
need not be closed in $L^1$. Here $H$ is called $1$-admissible if it is $S$-integrable and $H\cdot S\ge -1$. This space is a natural space of outcomes in the context of portfolio optimization with semi-static trading opportunities. Thus, existence of optimal strategies is a delicate issue in such a setting.

Finally, we provide a result demonstrating that the non-closedness in Theorems~\ref{T:Lp} and~\ref{T:Lp_cont} is caused by the infinite-dimensionality of the space $V$ of static claims. If $V$ is replaced by a finite-dimensional space, then closedness is retained.

\begin{theorem}
Let $C_1,\ldots,C_n$ be linearly independent elements of $L^1$. The closure in $L^0$ of the space
\[
W=\left\{(H\cdot S)_T + \sum_{i=1}^n a_i C_i \colon \text{$H$ is $1$-admissible}, \, a_1,\ldots,a_n \in \R\right\}
\]
is contained in $W-L^0_+$. Here $H$ is called $1$-admissible if it is $S$-integrable and $H\cdot S\ge -1$.
\end{theorem}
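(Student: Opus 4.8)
The plan is to reduce the closedness of $W$ to the known closedness property of the purely dynamic space, namely that the $L^0$-closure of $\{(H\cdot S)_T\colon H \text{ is $1$-admissible}\}$ is contained in $U-L^0_+$, or more precisely that this set is itself closed under convergence in probability of uniformly-from-below-bounded sequences up to subtraction of a nonnegative random variable (the Delbaen--Schachermayer result cited in the introduction). So suppose $f_k = (H^k\cdot S)_T + \sum_{i=1}^n a_i^k C_i - r_k$ with $H^k$ $1$-admissible, $a^k\in\R^n$, $r_k\in L^0_+$, and $f_k\to f$ in probability; I want to show $f\in W-L^0_+$. The key dichotomy is whether the coefficient vectors $a^k=(a_1^k,\dots,a_n^k)$ remain bounded in $\R^n$ or not.

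First I would treat the bounded case. If $\sup_k|a^k|<\infty$, pass to a subsequence so that $a^k\to a\in\R^n$. Then $\sum_i a_i^k C_i\to\sum_i a_i C_i$ in $L^1$, hence in probability, so $(H^k\cdot S)_T - r_k = f_k - \sum_i a_i^k C_i \to f - \sum_i a_i C_i$ in probability. Since each $(H^k\cdot S)_T\ge -1$, the sequence $(H^k\cdot S)_T - r_k$ is bounded below by $-1$ (note $r_k\ge0$), so the Delbaen--Schachermayer closedness result applies: $f-\sum_i a_i C_i\in U - L^0_+$, i.e.\ there is a $1$-admissible $H$ (after replacing the supermartingale by an admissible strategy, using that $U - L^0_+$ coincides with the $1$-admissible cone minus $L^0_+$ when the limit is bounded below) with $f-\sum_i a_i C_i\le (H\cdot S)_T$. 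Hence $f\le (H\cdot S)_T + \sum_i a_i C_i\in W$, so $f\in W-L^0_+$.

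The main obstacle is the unbounded case: $\sup_k|a^k|=\infty$. Here I would normalize. Pass to a subsequence with $\lambda_k:=|a^k|\to\infty$ and $a^k/\lambda_k\to b$ for some unit vector $b\in\R^n$. Dividing the identity for $f_k$ by $\lambda_k$ gives $(H^k/\lambda_k\cdot S)_T + \sum_i (a_i^k/\lambda_k) C_i - r_k/\lambda_k = f_k/\lambda_k\to 0$ in probability. Now $H^k/\lambda_k$ is $(1/\lambda_k)$-admissible, and $1/\lambda_k\to0$, so the gains processes $H^k/\lambda_k\cdot S$ are bounded below by a sequence tending to $0$; together with $\sum_i(a_i^k/\lambda_k)C_i\to\sum_i b_i C_i$ in $L^1$, this forces $(H^k/\lambda_k\cdot S)_T - r_k/\lambda_k\to -\sum_i b_i C_i$ in probability. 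Invoking the closedness result once more (now the negative parts of the gains tend to $0$, which is the strongest possible admissibility control), we get $-\sum_i b_i C_i\in U - L^0_+$, i.e.\ $-\sum_i b_i C_i\le (G\cdot S)_T$ for some $S$-integrable $G$ with $G\cdot S$ a supermartingale (in fact a $0$-admissible, hence $\ge 0$... ) — more carefully, the scaling shows $-\sum_i b_i C_i$ lies in the $L^0$-closure of the $\varepsilon$-admissible gains for every $\varepsilon>0$, which by the same theorem is contained in $U - L^0_+$ with the supermartingale being nonpositive-free, so $\sum_i b_i C_i \ge -(G\cdot S)_T$ with $(G\cdot S)_T \le 0$-ish. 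The upshot I want is that $\sum_i b_i C_i$ is itself (dominated by) an element of $U$, i.e.\ $-\sum_i b_i C_i\in U - L^0_+$ with $G\cdot S$ a true supermartingale started at $0$; taking expectations and using $\E[(G\cdot S)_T]\le 0$ yields $\E[\sum_i b_i C_i]\ge 0$ — but applying the same argument to $-b$ (legitimate because the whole construction is symmetric under $a^k\mapsto -a^k$ only if we also flip $H^k$, which is fine) would give $\E[\sum_i b_i C_i] = 0$ together with $\sum_i b_i C_i = -(G\cdot S)_T$ $\P$-a.s.\ for a supermartingale $G\cdot S$ with terminal expectation $0$, forcing $G\cdot S$ to be a true martingale and $\sum_i b_i C_i = -(G\cdot S)_T$ to be a genuine stochastic integral.

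Now I use this to reduce $n$. Since $b\neq 0$, relabel so that $b_n\neq0$; then $C_n = -\frac{1}{b_n}\big(\sum_{i<n} b_i C_i + (G\cdot S)_T\big)$, so modulo an element of $U$ (which can be absorbed into the dynamic part, using $U_\infty$-type stability or simply that adding a martingale stochastic integral to a $1$-admissible one changes admissibility by a bounded amount — here one must be slightly careful, but since $G\cdot S$ is a true martingale one can absorb it at the cost of enlarging the admissibility constant, which is harmless for the $L^0$-closure statement $W\subseteq W - L^0_+$ after noting $W$ does not depend on the admissibility constant up to $-L^0_+$), every element of $W$ can be rewritten using only $C_1,\dots,C_{n-1}$. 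This contradicts... no — it doesn't contradict linear independence of the $C_i$ as payoffs, but it does let us run the entire argument with $n-1$ claims instead of $n$, and $f$ lies in the corresponding smaller space's closure. By induction on $n$ (the base case $n=0$ being exactly the Delbaen--Schachermayer result), we conclude $f\in W-L^0_+$. I expect the delicate points to be (a) justifying that the limits of $\varepsilon$-admissible gains as $\varepsilon\to0$ land in $U$ with a \emph{true martingale} representative, which is where the expectation/symmetry argument is needed, and (b) the bookkeeping in absorbing $G\cdot S$ into the dynamic component without breaking $1$-admissibility — both are technical rather than conceptual, and the conceptual core is simply the bounded/unbounded dichotomy on the coefficient vectors combined with the classical dynamic closedness theorem.
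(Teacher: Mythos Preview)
Your bounded-coefficients case is correct and matches the paper. The trouble is entirely in the unbounded case, where your argument has a real gap and is also unnecessary.

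The gap: your symmetry step ``apply the same argument to $-b$, legitimate because we can also flip $H^k$'' does not work. If $H^k$ is $1$-admissible, i.e.\ $H^k\cdot S\ge -1$, there is no reason for $-H^k\cdot S\ge -1$; admissibility is one-sided. So you cannot conclude $\sum_i b_i C_i\in U-L^0_+$ from $-\sum_i b_i C_i\in U-L^0_+$, and the chain that was supposed to force $\sum_i b_i C_i$ to be a true stochastic integral collapses. The subsequent induction step has a related problem: absorbing a martingale $G\cdot S$ into the dynamic part need not preserve $1$-admissibility, and your remark that ``$W$ does not depend on the admissibility constant up to $-L^0_+$'' is not justified.

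The fix, which is what the paper does, is to observe that the unbounded case simply cannot occur. By $1$-admissibility and the supermartingale property, $\E[|(H^k\cdot S)_T|]\le 2$ for every $k$, so $\{(H^k\cdot S)_T\}$ is bounded in $L^1$ and hence in $L^0$. Since $\{f_k\}$ is also bounded in $L^0$ (it converges), the static parts $h^k=\sum_i a_i^k C_i$ are bounded in $L^0$. Writing $h^k=|a^k|\sum_i(a_i^k/|a^k|)C_i$ and passing to a subsequence with $a^k/|a^k|\to b$, linear independence gives $\sum_i(a_i^k/|a^k|)C_i\to\sum_i b_i C_i\ne 0$; boundedness of $h^k$ in $L^0$ then forces $|a^k|$ to be bounded. (Equivalently: in your normalization, $\E[|(H^k/\lambda_k\cdot S)_T|]\le 2/\lambda_k\to 0$, so the normalized gains go to $0$ in $L^1$, forcing $\sum_i b_i C_i=0$ --- a contradiction.) After that, your bounded-case argument finishes the proof, exactly as in the paper.
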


\begin{proof}
Let $\{(H^m\cdot S)_T + h^m:m\ge1\}$ be an $L^0$-convergent sequence in $W$. In particular, it is bounded in $L^0$. By $1$-admissibility, $H^m\cdot S$ is a supermartingale, whence $\E[|(H^m\cdot S)_T|]\le1+\E[1+(H^m\cdot S)_T|]\le 2$, so that the sequence $\{(H^m\cdot S)_T: m\ge1\}$ is bounded in $L^1$ and hence in $L^0$. Thus the sequence $\{h^m:m\ge1\}$ is bounded in~$L^0$. Now write $h^m=r^m \sum_{i=1}^n a^m_i C_i$, where $r^m\ge 0$ and the vector $a^m=(a^m_1,\ldots,a^m_n)$ has unit norm, and take a subsequence to obtain $a^m\to a$ for some unit vector $a$. Thus $\sum_{i=1}^n a^m_i C_i$ converges to a random variable, which is nonzero by linear independence of $C_1,\ldots,C_n$. Boundedness in $L^0$ of $\{h^m:m\ge 1\}$ then implies that $\{r^m:m\ge1\}$ is bounded, hence convergent after passing to a subsequence. To summarize, we have shown that by passing to a subsequence, we may suppose that $h^m$ is convergent in $L^0$. Thus $(H^m\cdot S)_T$ also converges in $L^0$, say to a limit~$f$. By Corollary~4.11 in~\cite{DS99}, this limit is of the form $f=(H\cdot S)_T-g$ for some $1$-admissible $H$ and some $g\in L^0_+$. This proves the result.
\end{proof}

\section{The discrete case} \label{S:T:Lp:a}

The following lemma describes the individual models used in the proof of the discrete time Theorem~\ref{T:Lp}. These individual models are later pasted together according to the procedure described in Section~\ref{S:T:Lp:b}.

\begin{lemma} \label{L:t012 n}
Fix $\varepsilon\in(0,1/2]$, $M>0$, and $a,b\in[2,3]$. There exists a discrete time stochastic basis $(\Omega,\Fcal,(\Fcal_t)_{t\in\{0,1,2\}},\P)$ with finite sample space $\Omega$, equipped with a martingale $S=(S_t)_{t\in\{0,1,2\}}$ with $S_2$ taking values in $\{\pm a,\pm b\}$, as well as a random variable $f$ such that:
\begin{enumerate}
\item\label{L:t012 n:1} $f\in U_\infty+V_\infty$ and $f\ge 0$,
\item\label{L:t012 n:2} $\|f\|_p = M (\varepsilon/2)^{1/p}$ for all $p\in[1,\infty)$,
\item\label{L:t012 n:3} any representation $f\le u+v$ with $u\in U$ and $v\in V$ satisfies $\|u\|_1+\|v\|_1\ge M/16$.
\end{enumerate}
\end{lemma}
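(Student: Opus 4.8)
The plan is to realize $f$ as a multiple of the indicator of a rare atom of a two‑period model in which $S$ is ``frozen'' along that atom at time $2$. First observe that condition \ref{L:t012 n:2} essentially dictates the shape of $f$: if $f\ge0$ and $\|f\|_p=M(\varepsilon/2)^{1/p}$ for every $p\in[1,\infty)$, then letting $p\to\infty$ gives $\|f\|_\infty=M$, and then the identity $\E[f^p]=M^p\P(f=M)+\E[f^p\mathbf 1_{\{f<M\}}]$ together with $\E[f^p]=M^p\varepsilon/2$ forces, as $p\to\infty$, $\P(f=M)=\varepsilon/2$, hence $\E[f^p\mathbf 1_{\{f<M\}}]=0$ for all $p$, i.e.\ $f=M\mathbf 1_A$ with $\P(A)=\varepsilon/2$. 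So it suffices to construct a finite‑$\Omega$ two‑period martingale $S$ with $S_2\in\{\pm a,\pm b\}$ and an event $A$ with $\P(A)=\varepsilon/2$ such that $M\mathbf 1_A\in U_\infty+V_\infty$ while $u+v\ge M\mathbf 1_A$ with $u\in U$, $v\in V$ forces $\|u\|_1+\|v\|_1\ge M/16$. Since $\Omega$ is finite, every stochastic integral against $S$ is automatically a bounded martingale, so $U_\infty=U$ is the set of all stochastic integrals and $V_\infty=V=\{h(S_2):h\text{ measurable}\}$; the two conditions therefore concern the same pair of spaces.

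Write $\alpha=\min\{a,b\}$ and $\beta=\max\{a,b\}$, so $\{\pm a,\pm b\}=\{\pm\alpha,\pm\beta\}$ with $2\le\alpha\le\beta\le3$; I treat the generic case $\alpha<\beta$ and note that the degenerate case $\alpha=\beta$ (only two possible values of $S_2$) needs a minor separate variant. Take $\Fcal_0$ trivial and $S_0=\alpha/2$; at time $1$ let $S_1\in\{\alpha,0\}$ with probability $\tfrac12$ each and $\Fcal_1=\sigma(S_1)$; from $\{S_1=0\}$ let $S_2=\pm\alpha$ with probability $\tfrac12$ each; from $\{S_1=\alpha\}$ let $S_2=\alpha$ with probability $\varepsilon$ and put the remaining mass $1-\varepsilon$ on $\{-\alpha,\beta\}$ so that $\E[S_2\mid S_1=\alpha]=\alpha$ — possible with nonnegative weights precisely because $\beta\ge\alpha$, the up‑value $\beta$ lying above $\alpha$ and the down‑value $-\alpha$ below it. Let $\Fcal_2=\sigma(S_1,S_2)$, let $A=\{S_1=\alpha,\ S_2=\alpha\}$, an atom of probability $\tfrac12\varepsilon$, and set $f=M\mathbf 1_A$; then \ref{L:t012 n:2} is automatic.

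For \ref{L:t012 n:1} I would exhibit the explicit decomposition $f=u_0+v_0$, where $u_0=(H\cdot S)_2$ for the predictable $H$ with $H_1\equiv0$ and $H_2=-\tfrac{M}{2\alpha}$ on $\{S_1=\alpha\}$, $H_2=-\tfrac{M}{\alpha}$ on $\{S_1=0\}$, while $v_0=M\mathbf 1_{\{S_2=\alpha\}}-M\mathbf 1_{\{S_2=-\alpha\}}+\tfrac{M(\beta-\alpha)}{2\alpha}\mathbf 1_{\{S_2=\beta\}}$; checking $u_0+v_0=M\mathbf 1_A$ is a state‑by‑state verification, and $u_0\in U_\infty$, $v_0\in V_\infty$ (a portfolio of digital payoffs on $S_2$) hold trivially on the finite space. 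For \ref{L:t012 n:3}, take any $u=(H\cdot S)_2\in U$ and $v=h(S_2)\in V$ with $u+v\ge f$. Along $A$ the process $S$ is constant at time $2$ ($S_2=S_1=\alpha$), so $u|_A=H_1(S_1-S_0)|_A=\tfrac{\alpha}{2}H_1$, a single real number governed by $H_1$ only, whence $\tfrac{\alpha}{2}H_1+h(\alpha)\ge M$. If $h(\alpha)\ge M/2$, then since $\P(S_2=\alpha)\ge\P(S_1=0,\,S_2=\alpha)=\tfrac14$ we get $\|v\|_1\ge|h(\alpha)|\,\P(S_2=\alpha)\ge M/8$. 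Otherwise $\tfrac{\alpha}{2}H_1>M/2$, so $|H_1|>M/\alpha$, and since $H\cdot S$ is a martingale, conditional Jensen gives $\|u\|_1=\E|(H\cdot S)_2|\ge\E|(H\cdot S)_1|=|H_1|\,\E|S_1-S_0|=|H_1|\cdot\tfrac{\alpha}{2}>M/2$. Either way $\|u\|_1+\|v\|_1\ge M/8\ge M/16$.

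The real content is condition \ref{L:t012 n:3}, and the crucial observation is that the rare atom $A$ cannot be hedged cheaply: because $S$ is frozen at time $2$ along $A$, the stochastic‑integral part is pinned there to $H_1\,\E[S_1-S_0\mid\text{the time-1 node of }A]$, so covering $A$ needs either a large static position on the non‑negligible set $\{S_2=\alpha\}$ or a large $H_1$, and the latter is expensive because the time‑$1$ increment of $S$ has $L^1$‑norm $\alpha/2\ge1$. The only delicate design point is arranging ``$\P(A)=\varepsilon/2$ exactly'', ``$S$ frozen along $A$ at time $2$'', ``$\{S_2=\alpha\}$ non‑negligible'' and ``$\E|S_1-S_0|$ bounded below'' simultaneously with $S_2\in\{\pm a,\pm b\}$; this is what forces the (harmless) split on whether $a=b$, and is the one spot where care is needed.
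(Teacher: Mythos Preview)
Your argument is correct and genuinely different from the paper's. The paper builds its model by \emph{enlarging} the time-$1$ filtration with an auxiliary Bernoulli variable $X$ independent of $S_1$: one sets $S_0=0$, $S_1=\pm1$, $\Fcal_1=\sigma(S_1,X)$, lets $\widetilde A=\{S_1=1\}\cup\{X=1\}$, and routes $S_2$ to $\pm a$ on $\widetilde A$ and to $\pm b$ on $\widetilde A^c$; the random variable is $f=MX\mathbf 1_{\{S_1=-1\}}$. The lower bound \ref{L:t012 n:3} is then obtained via a separate ``near-orthogonality'' lemma showing $\E[fu]\le\varepsilon M\|u\|_1$ and $\E[fv]\le 12\varepsilon M\|v\|_1$, after which $\tfrac12\varepsilon M^2=\E[f^2]\le\E[fu]+\E[fv]$ gives the bound. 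By contrast you work in the \emph{natural} filtration of $S$, engineer the rare atom so that $S$ is constant over the second period there, and argue \ref{L:t012 n:3} by a direct dichotomy on the size of $h(\alpha)$ versus $H_1$. Your route is more elementary (no auxiliary lemma, no enlarged filtration) and even yields the sharper constant $M/8$; the paper's route has the advantage that the orthogonality estimate is a single conceptual mechanism that ports verbatim to the Brownian model in the continuous-time companion lemma, whereas your ``freezing $S$ at time~$2$'' device is inherently discrete.

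One small caveat: you set aside the degenerate case $a=b$ as needing a ``minor separate variant'' but do not supply it; note, however, that the paper's own proof also tacitly uses $a\ne b$ (the identity $\widetilde A=\{|S_2|=a\}$ fails otherwise), and only the case $a\ne b$ is ever invoked in the subsequent pasting.
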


\begin{proof}[Proof of Lemma~\ref{L:t012 n}]
The price process $S=(S_t)_{t=0,1,2}$ and filtration $(\Fcal_t)_{t=0,1,2}$ are constructed as follows. Define $S_0=0$ and let $\Fcal_0=\{\emptyset,\Omega\}$. Let $S_1=\pm 1$ with probability $1/2$ each. Next, let $X$ be a Bernoulli random variable with $\P(X=1)=\varepsilon=1-\P(X=0)$, independent of $S_1$. Set $\Fcal_1=\sigma(S_1,X)$. Define the event $A=\{S_1=1\}$, and consider the slightly larger event
\[
\widetilde A = A \cup \{X=1\}.
\]
Now set $S_2=\pm a$ on $\widetilde A$ and $S_2=\pm b$ on $\widetilde A^c$. The martingale condition $\E[S_2\mid\Fcal_1]=S_1$ pins down the conditional probabilities,
\begin{equation} \label{eq:PS2|F1}
\begin{aligned}
\P(S_2=a\mid\Fcal_1) &= \frac{a+S_1}{2a} &&\quad \text{on $\widetilde A$},\\
\P(S_2=b\mid\Fcal_1) &= \frac{b+S_1}{2b} &&\quad \text{on $\widetilde A^c$}.\\
\end{aligned}
\end{equation}
Note that these indeed lie in $(0,1)$ since $a,b\ge 2$ and $S_1=\pm1$. Finally, set $\Fcal=\Fcal_2=\sigma(S_1,X,S_2)$. This completes the description of the stochastic basis $(\Omega,\Fcal,\F,\P)$ and the prices process $S$. In particular, observe that the above construction only involves three independent ``coin flips'' and can thus be accommodated on the eight-point sample space $\Omega=\{0,1\}^3$.

The random variable $f$ is defined to be
\[
f = M (\bm 1_{\widetilde A} - \bm 1_A) = MX\bm 1_{A^c}.
\]
We now prove that $f$ satisfies the properties \ref{L:t012 n:1}--\ref{L:t012 n:3}.

\ref{L:t012 n:1}: Clearly $f\ge0$. Observe that
\begin{equation} \label{eq:digital}
f = -\frac{M}{2} S_1 +  \frac{M}{2} \left( \bm 1_{\widetilde A}-\bm 1_{\widetilde A^c}\right).
\end{equation}
Since $\widetilde A=\{|S_2|=a\}$, it is clear that $f\in U_\infty+V_\infty$.

\ref{L:t012 n:2}: Simply note that $\E[|f|^p] = M^p \varepsilon / 2$.

\ref{L:t012 n:3}: Suppose $f\le u+v$ for some $u\in U$ and $v\in V$. By nonnegativity of $f$ we have $f^2\le fu+fv$. Applying part~\ref{L:t012 n:2} and Lemma~\ref{L:foU+V} below yields
\[
\frac{1}{2}\varepsilon M^2 = \E[f^2] \le \E[fu] + \E[fv] \le 8 \varepsilon M \left( \|u\|_1 + \|v\|_1 \right).
\]
This completes the proof of Lemma~\ref{L:t012 n}.
\end{proof}

The following key property of $f$ was used, which intuitively states that while $f$ is an element of $U+V$, it is almost orthogonal to both $U$ and $V$. This forces the components of $f$ in $U$ and $V$ to be large, despite $f$ itself being rather small. We put ourselves in the setting of the proof of Lemma~\ref{L:t012 n}.

\begin{lemma} \label{L:foU+V}
The random variable $f$ satisfies $\E[fu]\le \varepsilon M \|u\|_1$ for any $u\in U$, and $\E[fv]\le 12 \varepsilon M \|v\|_1$ for any $v\in V$.
\end{lemma}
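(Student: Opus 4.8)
The plan is to estimate the two bilinear forms $\E[fu]$ and $\E[fv]$ separately, exploiting the explicit structure of $f$ given in~\eqref{eq:digital} together with the martingale property of $S$ and the fact that $v$ is $\sigma(S_2)$-measurable.

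For the first bound, the key observation is that $f$ differs from a $\sigma(S_1)$-measurable function by the multiple $\tfrac{M}{2}(\bm 1_{\widetilde A}-\bm 1_{\widetilde A^c})$ of a digital option on $|S_2|$; I would rewrite $\E[fu]$ as $-\tfrac{M}{2}\E[S_1 u]+\tfrac{M}{2}\E[(\bm 1_{\widetilde A}-\bm 1_{\widetilde A^c})u]$. For any $u=(H\cdot S)_T\in U$, the supermartingale property gives $\E[u]\le 0$; more to the point, I would express the indicator $\bm 1_{\widetilde A}-\bm 1_{\widetilde A^c}$ and the sign $S_1$ in terms of stochastic integrals against $S$ (e.g. $S_1=(H^0\cdot S)_1$ for a suitable predictable $H^0$, and likewise a digital payoff on $|S_2|$ can be written as a stochastic integral up to time $2$ plus a constant, since $\widetilde A=\{|S_2|=a\}$), so that $\E[S_1 u]$ and $\E[(\bm 1_{\widetilde A}-\bm 1_{\widetilde A^c})u]$ become covariances of two stochastic integrals. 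One then uses the supermartingale orthogonality of $U$ — if $u\in U$ and $w$ is a bounded martingale of the form $(K\cdot S)$ that is also a stochastic integral, $\E[u\,w_T]$ is controlled — and, crucially, the smallness parameter: the payoffs $S_1$ and $\bm 1_{\widetilde A}-\bm 1_{\widetilde A^c}$ agree except on the small set $\{X=1\}\cap A^c$ of probability $\le\varepsilon$, which is where the factor $\varepsilon$ in the bound $\E[fu]\le\varepsilon M\|u\|_1$ comes from. So the mechanism is: $f$ is, up to a replicable (hence $\E[\cdot\,u]\le 0$-type) piece, supported on an event of probability $\le\varepsilon$, and on that event $|f|=M$; pairing with $u$ and using $\|u\|_\infty$-free $L^1$ control via the supermartingale property yields the $\varepsilon M\|u\|_1$ bound.

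For the second bound I would use that $v\in V=L^1(\sigma(S_2))$, so $v$ is a function of $S_2\in\{\pm a,\pm b\}$, i.e. determined by four values. Writing $f$ via~\eqref{eq:digital}, $\E[fv]=-\tfrac M2\E[S_1 v]+\tfrac M2\E[(\bm 1_{\widetilde A}-\bm 1_{\widetilde A^c})v]$. Now $\E[(\bm 1_{\widetilde A}-\bm 1_{\widetilde A^c})v]=\E[v\mid |S_2|=a]\P(\widetilde A)-\E[v\mid|S_2|=b]\P(\widetilde A^c)$, which is a difference of conditional averages of $v$; since these conditional laws of $S_2$ given $|S_2|$ are, by~\eqref{eq:PS2|F1}, close (within $O(\varepsilon)$ in total variation, because $\widetilde A$ and $A$ differ by the small set $\{X=1\}$) to the conditional laws one would get if one conditioned on $S_1$ instead, one can pair this against the term $\E[S_1 v]=\E[v\,(\bm 1_A-\bm 1_{A^c})]$. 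The difference of the two is then a bilinear pairing of $v$ with a signed measure of total mass $O(\varepsilon)$, hence bounded by $O(\varepsilon)\|v\|_\infty$; but since $v$ takes only four values and the relevant conditional expectations are genuine averages, one upgrades $\|v\|_\infty$ to $\|v\|_1$ at the cost of a constant — here the conditional probabilities $\P(S_2=\cdot\mid\Fcal_1)$ in~\eqref{eq:PS2|F1} are bounded below by $\tfrac{a-1}{2a}\ge\tfrac14$ (using $a,b\in[2,3]$), so $\|v\|_\infty$-type quantities on each atom are comparable to $\|v\|_1$. Tracking the numerical constants through this gives the claimed $12\varepsilon M\|v\|_1$.

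The main obstacle I anticipate is the second estimate: getting from an a priori bound in terms of $\|v\|_\infty$ to one in terms of $\|v\|_1$ requires using that $S_2$ lives on an atomic space with all atoms of probability bounded below (uniformly in the other parameters), and one must be careful that the bound $\E[fv]\le 12\varepsilon M\|v\|_1$ does \emph{not} require $v\ge 0$ — so one cannot simply drop terms, and the cancellation between the $S_1$-piece and the digital piece of $f$ must be made quantitative via a total-variation-type estimate on the conditional laws in~\eqref{eq:PS2|F1}. The first estimate is conceptually the same cancellation argument but is cleaner because the supermartingale property of $u\in U$ supplies the needed one-sided control of $\E[\cdot\,u]$ without any positivity assumption on $u$. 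Once both pieces are in hand, combining them with $\|f\|_1=M\varepsilon/2$ (i.e. $\E[|f|]=M\varepsilon/2$, not needed here but consistent with part~\ref{L:t012 n:2}) and summing the two bounds yields $\E[fu]+\E[fv]\le\varepsilon M(\|u\|_1+12\|v\|_1)\le 8\varepsilon M(\|u\|_1+\|v\|_1)$, which is exactly what the proof of Lemma~\ref{L:t012 n} invokes.
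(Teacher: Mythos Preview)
Your plan misses the simple mechanism the paper uses and, for the $u$-bound, relies on a step that is actually false.

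\textbf{The $u$-bound.} You propose to write the digital payoff $\bm 1_{\widetilde A}-\bm 1_{\widetilde A^c}$ as a stochastic integral (plus constant) and then exploit some ``supermartingale orthogonality'' of $U$. But this representation does not exist: $\widetilde A\in\Fcal_1$, so $\E[\bm 1_{\widetilde A}-\bm 1_{\widetilde A^c}\mid\Fcal_1]=\bm 1_{\widetilde A}-\bm 1_{\widetilde A^c}$ depends on $X$, whereas for any predictable $K$ one has $\E[(K\cdot S)_2+c\mid\Fcal_1]=K_1S_1+c$, a function of $S_1$ alone. The market here is incomplete precisely because $\Fcal_1=\sigma(S_1,X)$ is strictly larger than $\sigma(S_1)$. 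Also, stochastic integrals are not orthogonal to one another, so even if the representation existed there is no general inequality of the type $\E[wu]\le 0$ for $w,u\in U$. The paper's argument is far more direct and uses none of this: since $f=MX\bm 1_{A^c}$ is $\Fcal_1$-measurable and $H\cdot S$ is a genuine martingale in this finite discrete model, $\E[fu]=\E[f\,(H\cdot S)_1]$; now $(H\cdot S)_1=H_1S_1$ with $H_1$ deterministic, so independence of $X$ and $S_1$ pulls out the factor $\varepsilon$, and one finishes with $\E[\bm 1_{A^c}(H\cdot S)_1]=\E[\bm 1_{A^c}u]\le\|u\|_1$.

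\textbf{The $v$-bound.} Your total-variation comparison of conditional laws, followed by an upgrade from $\|v\|_\infty$ to $\|v\|_1$ via lower bounds on atom masses, can probably be pushed through, but it is unnecessarily heavy. The paper simply conditions on $S_2$: since $v$ is $\sigma(S_2)$-measurable and $f=MX\bm 1_{A^c}\ge 0$,
\[
\E[fv]\le M\,\E\!\big[|v|\,\E[X\mid S_2]\big],
\]
so it suffices to show $\E[X\mid S_2]\le 8\varepsilon$ pointwise. This is immediate from $\E[X]=\varepsilon$ and the uniform lower bound $\P(S_2=\pm a)\ge\frac12\cdot\frac{a-1}{2a}\ge\frac18$ (and $X=0$ on $\{|S_2|=b\}$). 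No cancellation between the two pieces of~\eqref{eq:digital} is needed, and the $\|v\|_\infty$-to-$\|v\|_1$ obstacle never arises.

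Finally, your concluding inequality $\varepsilon M(\|u\|_1+12\|v\|_1)\le 8\varepsilon M(\|u\|_1+\|v\|_1)$ is false as written; one gets the constant $12$ (or $8$ with the sharper bound actually proved), which still yields $\|u\|_1+\|v\|_1\ge M/24$ (resp.\ $M/16$) and suffices for Lemma~\ref{L:t012 n}\ref{L:t012 n:3}.
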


\begin{proof}
Pick any $u=(H\cdot S)_2 \in U$. In the present discrete setting, $H\cdot S$ is a martingale. Thus, using also the independence of $X$ and $S_1$,
\[
\E[fu]=M\E[X\bm 1_{A^c} (H\cdot S)_1] = M \varepsilon \E[\bm 1_{A^c}(H\cdot S)_1] = M \varepsilon \E[\bm 1_{A^c}\,u] \le M\varepsilon \|u\|_1.
\]
Next, for any $v\in V$,
\[
\E[ fv ] \le M \E[ |v| \E[ X \mid S_2] ].
\]
We claim that $\E[ X \mid S_2]\le 12\varepsilon$, which then completes the proof of the lemma. Since $X=0$ on $\widetilde A^c$, the bound clearly holds on that event. Furthermore, in view of \eqref{eq:PS2|F1} and the fact that $a\in[2,3]$ and $\P(\widetilde A)\ge1/2$, we have $\P(S_2=a)=\E[\bm 1_{\widetilde A}\,\P(S_2=a\mid\Fcal_1)]\ge \frac{1}{2}\times\frac{a-1}{2a}\ge 1/8$. Thus
\[
\E[X\mid S_2=a] \le \frac{\E[X]}{\P(S_2=a)} \le 8\varepsilon,
\]
showing that the claimed bound holds on the event $\{S_2=a\}$. The event $\{S_2=-a\}$ is treated similarly.
\end{proof}

\begin{remark} \label{R:interpretation}
The second part of the representation~\eqref{eq:digital} of the payoff $f$ can be interpreted as a digital option written on the final value $S_2$ of the price process. Indeed, it pays either $+M/2$ if $S_2=\pm a$, or $-M/2$ if $S_2=\pm b$. Thus $f$ can be viewed as a portfolio consisting of a digital option together with the partial hedge $-(M/2)S_1$.
\end{remark}

\section{Pasting together the individual models} \label{S:T:Lp:b}

We now describe the pasting procedure that produces a proof of Theorem~\ref{T:Lp} from the building blocks in Lemma~\ref{L:t012 n}.

Define
\[
\varepsilon_n = 2^{-n^2}, \qquad M_n = 2^n,
\]
and select countably many distinct numbers $a_n$, $b_n$ in the interval $[2,3]$. Now apply Lemma~\ref{L:t012 n} for each~$n$ to obtain stochastic bases $(\Omega_n,\Fcal^n,\F^n, \P_n)$ and corresponding price processes $S^n=(S^n_t)_{t\in\{0,1,2\}}$ and random variables $f_n$ satisfying the properties of Lemma~\ref{L:t012 n}, with $(\varepsilon,M,a,b)$ replaced by $(\varepsilon_n,M_n,a_n,b_n)$.

We now paste these models together. Specifically, define
\[
\Omega = \bigcup_{n\ge1} \Omega_n, \qquad \Fcal_t = \sigma(A: A\in\Fcal^n_t,\, n\ge 1), \qquad \P(\,\cdot\,\mid\Omega_n) = \P_n, \qquad \P(\Omega_n) = 2^{-n},
\]
where $\Omega$ is understood as a disjoint union.  In particular, the collection $\{\Omega_n:n\ge 1\}$ constitutes an $\Fcal_0$-measurable partition of $\Omega$. Next, define the price process by
\[
S_t = \sum_{n\ge 1} S^n_t \bm 1_{\Omega_n},
\]
and let the random variables $g_m$ and $g$ be given by
\[
g_m = \sum_{n=1}^m f_n \bm 1_{\Omega_n}, \qquad g = \sum_{n\ge1} f_n \bm 1_{\Omega_n}.
\]
Clearly $g_m$ converges almost surely to $g$. In fact, the convergence actually takes place in $L^p$ for any $p\in[1,\infty)$. Indeed, writing $\E_n$ for the expectation under $\P_n$, we have by Lemma~\ref{L:t012 n}\ref{L:t012 n:2},
\[
\E[ |g-g_m|^p ]  = \sum_{n=m+1}^\infty 2^{-n}\, \E_n[ |f_n|^p ] =  \frac{1}{2} \sum_{n=m+1}^\infty  2^{-n}\varepsilon_n M_n^p.
\]
Since $2^{-n}\varepsilon_n M_n^p=2^{-n(n+1-p)}$, the right-hand side tends to zero as $m$ tends to infinity.

Moreover, each $g_m$ lies in $U_\infty+V_\infty$. Indeed, for each $n$, Lemma~\ref{L:t012 n}\ref{L:t012 n:3} yields $f_n=(H^n\cdot S^n)_2 + h^n(S^n_2)$ for some $H^n$ and $h^n$ such that the two components are bounded. Thus
\[
f_n\bm 1_{\Omega_n} = (H^n\bm 1_{\Omega_n} \cdot S)_2 + h^n(S_2)\bm 1_{\Omega_n},
\]
and since $\Omega_n=\{|S_2|\in\{a_n,b_n\}\}$, the second term on the right-hand side is a (bounded) function of $S_2$. Thus $f_n\bm 1_{\Omega_n}\in U_\infty+V_\infty$. Since $g_m$ is a finite sum of such terms, it follows that $g_m$ lies in $U_\infty+V_\infty$ as well. Also, $g_m$ is nonnegative since each $f_n$ is nonnegative.

Finally, assume for contradiction that $g$ lies in $U+V-L^0_+$, say $g\le u+v$ with $u=(H\cdot S)_2$ and $v=h(S_2)$. Then $u$ and $v$ lie in $L^1$. On the other hand, by considering the restrictions to $\Omega_n$ we deduce
\[
f_n = g|_{\Omega_n} \le u_n + v_n,
\]
where $u_n=(H|_{\Omega_n}\cdot S^n)_2$ and $v_n=h(S^n_2)$. In view of Lemma~\ref{L:t012 n}\ref{L:t012 n:3}, therefore,
\[
\|u\|_1 + \|v\|_1 = \sum_{n\ge 1} 2^{-n}\,\E_n[ |u_n| + |v_n| ] \ge \frac{1}{16} \sum_{n\ge 1} 2^{-n}M_n = \infty.
\]
This contradiction shows that $g\notin U+V-L^0_+$, and completes the proof of Theorem~\ref{T:Lp}.

\section{The continuous case} \label{S:T:Lp_cont}

The proof of Theorem~\ref{T:Lp_cont} works exactly as the proof of Theorem~\ref{T:Lp}, except that Lemma~\ref{L:t012 n} needs to be replaced by Lemma~\ref{L:t012 n_cont} below when pasting together the individual models.

\begin{lemma} \label{L:t012 n_cont}
Fix $\varepsilon\in(0,1/2]$, $M>0$, and $a,b\in[2,3]$. There exists a stochastic basis $(\Omega,\Fcal,\F,\P)$ equipped with a Brownian motion $W$, a stopping time $T$, and a random variable~$f$ such that the price process $S=W^T$ is bounded with $S_T\in\{\pm a,\pm b\}$, and the random variable~$f$ satisfies
\begin{enumerate}
\item\label{L:t012 n:1} $f\in U_\infty+V_\infty$ and $f\ge 0$,
\item\label{L:t012 n:2} $\|f\|_p = M (\varepsilon/2)^{1/p}$ for all $p\in[1,\infty)$,
\item\label{L:t012 n:3} any representation $f\le u+v$ with $u\in U$ and $v\in V$ satisfies $\|u\|_1+\|v\|_1\ge M/16$.
\end{enumerate}
\end{lemma}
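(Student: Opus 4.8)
The plan is to transplant the discrete construction of Lemma~\ref{L:t012 n} onto a stopped Brownian motion, supplying the single binary randomisation $X$ externally and letting it enter the filtration exactly at the first time the price reaches level $\pm1$. Take $\Omega=C_0([0,\infty))\times\{0,1\}$, let $W$ be the canonical Brownian motion on the first factor and $X$ a Bernoulli variable with $\P(X=1)=\varepsilon$ on the second, independent of $W$. Put $T_1=\inf\{t\ge0:|W_t|=1\}$ and let $\F$ be the augmentation of the filtration generated by $W$ together with the process $t\mapsto X\bm 1_{\{T_1\le t\}}$; then $W$ remains an $\F$-Brownian motion, $T_1$ is an $\F$-stopping time, and $X$ is $\Fcal_{T_1}$-measurable but independent of $\Fcal^W_{T_1}:=\sigma(W_s:s\le T_1)$. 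Set $A=\{W_{T_1}=1\}$ and $\widetilde A=A\cup(A^c\cap\{X=1\})\in\Fcal_{T_1}$, define $T=\inf\{t\ge T_1:|W_t|=a\}$ on $\widetilde A$ and $T=\inf\{t\ge T_1:|W_t|=b\}$ on $\widetilde A^c$, and let $S=W^T$. Since $a,b\in[2,3]$, the path of $W$ on $[0,T]$ stays in $[-3,3]$, so $S$ is uniformly bounded, $S_T\in\{\pm a,\pm b\}$, and $\{|S_T|=a\}=\widetilde A$. Finally set, as in the discrete model, $f=M(\bm 1_{\widetilde A}-\bm 1_A)=MX\bm 1_{A^c}\ge0$. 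By the strong Markov property and gambler's ruin the joint law of $(W_{T_1},X,S_T)$ coincides with that of $(S_1,X,S_2)$ in the proof of Lemma~\ref{L:t012 n}; in particular \eqref{eq:PS2|F1} holds with $S_1$ replaced by $W_{T_1}$.

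Writing $S_1:=S_{T_1}=W_{T_1}$, the process $-\tfrac M2\bm 1_{\lc 0,T_1\rc}\cdot S=-\tfrac M2 S^{T_1}$ is a bounded $\F$-martingale, so $-\tfrac M2 S_1\in U_\infty$, and $\tfrac M2(\bm 1_{\widetilde A}-\bm 1_{\widetilde A^c})=\tfrac M2(2\bm 1_{\{|S_T|=a\}}-1)\in V_\infty$; together with $\bm 1_A=\tfrac12(1+S_1)$ this is exactly \eqref{eq:digital}, giving \ref{L:t012 n:1}. Part \ref{L:t012 n:2} follows from $\E[|f|^p]=M^p\P(X=1)\P(A^c)=M^p\varepsilon/2$, using $X\perp A^c$. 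For \ref{L:t012 n:3} I would re-prove the two inequalities of Lemma~\ref{L:foU+V}. Because the joint law of $(W_{T_1},X,S_T)$ matches the discrete one, the estimate for $v\in V$ carries over word for word: $\E[fv]\le M\,\E[|v|\,\E[X\mid S_T]]$, and $\E[X\mid S_T]\le 8\varepsilon$ since $X=0$ on $\{|S_T|=b\}$ while $\P(S_T=\pm a)\ge\tfrac12\cdot\tfrac{a-1}{2a}\ge\tfrac18$.

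The one genuinely new point is the bound $\E[fu]\le\varepsilon M\|u\|_1$ for $u=(H\cdot S)_T\in U$, which in continuous time needs extra care because $H\cdot S$ is only a supermartingale; this is where the placement of $X$ in the filtration is used. Since $f\ge0$ is $\Fcal_{T_1}$-measurable, the supermartingale property of $H\cdot S$ up to $T_1$ gives $\E[fu]\le\E[f\,(H\cdot S)_{T_1}]=M\,\E[X\bm 1_{A^c}(H\cdot S)_{T_1}]$. But $(H\cdot S)_{T_1}$ is built from $H$ and $W$ on $[0,T_1]$ alone, hence is $\Fcal^W_{T_1}$-measurable and so independent of $X$; therefore $\E[fu]\le M\varepsilon\,\E[\bm 1_{A^c}(H\cdot S)_{T_1}]\le M\varepsilon\,\E[((H\cdot S)_{T_1})^+]$. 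Finally $\E[(H\cdot S)_{T_1}]\le(H\cdot S)_0=0$ yields $\E[((H\cdot S)_{T_1})^+]\le\E[((H\cdot S)_{T_1})^-]$, and since $((H\cdot S)_t)^-$ is a submartingale on $[0,T]$ (a convex increasing function of the submartingale $-(H\cdot S)$) we get $\E[((H\cdot S)_{T_1})^-]\le\E[((H\cdot S)_T)^-]\le\|u\|_1$. Thus $\E[fu]\le\varepsilon M\|u\|_1$, and \ref{L:t012 n:3} follows as in Lemma~\ref{L:t012 n}: from $f\ge0$ and $f\le u+v$, $\tfrac12\varepsilon M^2=\E[f^2]\le\E[fu]+\E[fv]\le8\varepsilon M(\|u\|_1+\|v\|_1)$, i.e.\ $\|u\|_1+\|v\|_1\ge M/16$. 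Once Lemma~\ref{L:t012 n_cont} is in hand, Theorem~\ref{T:Lp_cont} follows from the pasting procedure of Section~\ref{S:T:Lp:b}, unchanged.

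The step I expect to require the most care is the chain of optional-sampling manipulations at the finite but unbounded stopping time $T_1$: one must check that ``$H\cdot S$ is a supermartingale on $[0,T]$'' legitimises $\E[(H\cdot S)_T\mid\Fcal_{T_1}]\le(H\cdot S)_{T_1}$ and $\E[(H\cdot S)_{T_1}]\le0$, and that $(H\cdot S)^-$ may be sampled at $T_1\le T$. For the representations relevant to \ref{L:t012 n:3} one may assume $\|u\|_1+\|v\|_1<\infty$, so that $u=(H\cdot S)_T\ge f-v\ge-|v|$ is integrable and bounded below; under the convention---customary in this setting---that a supermartingale on $[0,T]$ obeys the supermartingale inequality at all stopping times bounded by $T$, these steps hold, and the submartingale sampling of $(H\cdot S)^-$ then follows by conditional Jensen. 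The remaining work is the routine bookkeeping that $\F$ is a filtration for which $W$ is a Brownian motion and $T$ a stopping time, and that $(H\cdot S)_{T_1}$ is $\Fcal^W_{T_1}$-measurable; this is where the continuous-time argument genuinely differs from the discrete one, though none of it is hard.
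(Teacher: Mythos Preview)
Your construction and argument are essentially identical to the paper's: the same filtration enlargement revealing $X$ at the hitting time $T_1$ (the paper's $\sigma$), the same $f$, and the same route to \ref{L:t012 n:3} via the two almost-orthogonality estimates (the paper's Lemma~\ref{L:foU+V_cont}), with the independence step phrased as $(H\cdot S)_\sigma\in\Fcal_{\sigma-}$ rather than $\Fcal^W_{T_1}$. The only cosmetic difference is in bounding $\E[\bm 1_{A^c}(H\cdot S)_{T_1}]$: the paper writes $\bm 1_{A^c}Y_\sigma=Y_\sigma+\bm 1_A(Y_T-Y_\sigma)-\bm 1_A Y_T$ and applies the supermartingale inequality to the first two terms, whereas you go via $\E[Y_{T_1}^+]\le\E[Y_{T_1}^-]\le\E[Y_T^-]$ using the submartingale property of $Y^-$; both yield $\le\|u\|_1$.
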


Let $(\Omega,\Fcal,\P)$ be a probability space with a Brownian motion $W$ and an independent Bernoulli random variable $X$ with $\P(X=1)=\varepsilon=1-\P(X=0)$. Let
\[
\sigma = \inf\{t\ge0: |W_t| = 1\}
\]
be the first time the Brownian motion hits level one. Now let the filtration $\F$ be the one generated by the processes $W$ and $X\bm 1_{[\sigma,\infty)}$. Thus, prior to time $\sigma$, only the Brownian motion is observed. Then, at time $\sigma$, the realization $X$ is observed as well. With respect to this filtration, $\sigma$ is a stopping time, $W$ is a Brownian motion, and $X$ is $\Fcal_\sigma$-measurable but independent of $\Fcal_{\sigma-}$.

Next, similarly to the discrete time case, we define the events
\[
A = \{S_\sigma = 1\}, \qquad \widetilde A = A \cup \{X=1\},
\]
and we set
\[
S = W^T,\qquad T = \inf\{t\ge T: |W_t| = a\bm 1_{\widetilde A} + b\bm 1_{\widetilde A^c}\}.
\]
Thus depending on whether $\widetilde A$ or $\widetilde A^c$ occurs, $T$ is the first time the absolute value of the Brownian motion reaches $a$ or $b$, respectively. In particular, $T$ is a stopping time with $T>\sigma$. The price process $S$ is a bounded martingale with $S_T\in\{\pm a,\pm b\}$.

As in the discrete time case, the random variable $f$ is defined to be
\[
f = M (\bm 1_{\widetilde A} - \bm 1_A) = MX\bm 1_{A^c}.
\]
The three properties of Lemma~\ref{L:t012 n_cont} are proved exactly as in the discrete time case, where we use that Lemma~\ref{L:foU+V} remains valid in the present continuous time setting:

\begin{lemma} \label{L:foU+V_cont}
The random variable $f$ satisfies $\E[fu]\le \varepsilon M \|u\|_1$ for any $u\in U$, and $\E[fv]\le 8 \varepsilon M \|v\|_1$ for any $v\in V$.
\end{lemma}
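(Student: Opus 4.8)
The plan is to follow the discrete proof of Lemma~\ref{L:foU+V} step by step and to isolate the single point at which continuous time genuinely interferes: an element $u=(H\cdot S)_T\in U$ is now merely a supermartingale, not a finite-state martingale, so one can no longer move freely between the values of $H\cdot S$ at the stopping times $\sigma$ and $T$. I will prove the two estimates separately. Throughout write $Y=H\cdot S$, and recall that $Y_0=0$, that $Y$ is continuous because $S=W^T$ is, and that $\sigma<T$, so that optional sampling is available at $\sigma$.

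For the estimate on $v$, fix $v\in V=L^1(\sigma(S_T))$. Since $f=MX\bm 1_{A^c}$ and $v$ is $\sigma(S_T)$-measurable, I would write
\[
\E[fv]=M\,\E\!\big[v\,\E[X\bm 1_{A^c}\mid S_T]\big]\le M\,\E\!\big[|v|\,\E[X\mid S_T]\big],
\]
so that it suffices to check $\E[X\mid S_T]\le 8\varepsilon$. On $\{|S_T|=b\}=\widetilde A^c$ we have $X=0$, so the bound is trivial there. On $\widetilde A=\{|S_T|=a\}$, the gambler's-ruin identity for Brownian motion started at $W_\sigma\in\{\pm1\}$ and stopped on first hitting $\{\pm a\}$ gives $\P(S_T=\pm a\mid\Fcal_\sigma)=(a\pm W_\sigma)/(2a)\ge(a-1)/(2a)\ge1/4$ on $\widetilde A$; together with $\P(\widetilde A)\ge\P(A)=1/2$ this yields $\P(S_T=a)\ge1/8$ and $\P(S_T=-a)\ge1/8$, whence $\E[X\mid S_T=\pm a]\le\E[X]/\P(S_T=\pm a)\le 8\varepsilon$. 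This is the only change from the discrete argument: the transition probabilities of the chain are replaced by Brownian hitting probabilities.

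For the estimate on $u$, fix $u=(H\cdot S)_T\in U$; I would proceed in three steps. First, condition on $\Fcal_\sigma$: since $X\bm 1_{A^c}$ is $\Fcal_\sigma$-measurable and nonnegative, and $\E[Y_T\mid\Fcal_\sigma]\le Y_\sigma$ by the supermartingale property,
\[
\E[fu]=M\,\E[X\bm 1_{A^c}Y_T]=M\,\E\!\big[X\bm 1_{A^c}\,\E[Y_T\mid\Fcal_\sigma]\big]\le M\,\E[X\bm 1_{A^c}Y_\sigma].
\]
Second, note that $Y_\sigma=\lim_{t\uparrow\sigma}Y_t$ is $\Fcal_{\sigma-}$-measurable by continuity of $Y$, and $\bm 1_{A^c}=\bm 1_{\{W_\sigma=-1\}}$ is $\Fcal_{\sigma-}$-measurable because $W$ is continuous and $\sigma$ is the hitting time of the closed set $\{\pm1\}$; since $X$ is independent of $\Fcal_{\sigma-}$ this gives $\E[X\bm 1_{A^c}Y_\sigma]=\varepsilon\,\E[\bm 1_{A^c}Y_\sigma]$. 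Third, I would bound $\E[\bm 1_{A^c}Y_\sigma]$ by $\|u\|_1$: optional stopping gives $\E[Y_\sigma]\le Y_0=0$, hence
\[
\E[\bm 1_{A^c}Y_\sigma]\le-\E[\bm 1_A Y_\sigma]\le\E[\bm 1_A Y_\sigma^-]\le\E[Y_\sigma^-],
\]
where $Y_\sigma^-=\max(-Y_\sigma,0)$; and since $Y^-=(-Y)^+$ is a nonnegative submartingale, optional sampling for nonnegative submartingales — conveniently obtained by replacing $\sigma$ with the dyadic stopping times $\sigma_n\downarrow\sigma$ and applying Fatou's lemma to the nonnegative variables $Y_{\sigma_n}^-$ — gives $\E[Y_\sigma^-]\le\E[Y_T^-]\le\E[|Y_T|]=\|u\|_1$. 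Chaining the three steps yields $\E[fu]\le\varepsilon M\|u\|_1$.

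I expect the third step to be the main obstacle, and the only genuine departure from the discrete case. There, the martingale property allowed one to return from $\E[\bm 1_{A^c}(H\cdot S)_1]$ to $\E[\bm 1_{A^c}(H\cdot S)_2]$ and finish immediately; with only a supermartingale that route is closed, and one must instead exploit the nonnegativity of $X\bm 1_{A^c}$ (for the one-sided inequality in the first step), the $\Fcal_{\sigma-}$-measurability of $Y_\sigma$ and $\bm 1_{A^c}$ (to extract the factor $\E[X]=\varepsilon$ in the second step), and the sign information $\E[Y_\sigma]\le0$ together with submartingale optional sampling for $Y^-$ (to recover the bound by $\|u\|_1$ in the third step). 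With both estimates established, the conclusions of Lemma~\ref{L:t012 n_cont} follow verbatim from the discrete arguments, and the pasting of Section~\ref{S:T:Lp:b} applies unchanged.
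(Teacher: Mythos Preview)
Your argument is correct and matches the paper's proof almost line for line; the $v$-estimate and the first two steps of the $u$-estimate are identical to the paper's. The only divergence is in your third step: the paper writes
\[
\E[\bm 1_{A^c}Y_\sigma]=\E[Y_\sigma]+\E[\bm 1_A(Y_T-Y_\sigma)]-\E[\bm 1_A Y_T]
\]
and bounds the first two summands by zero via one more application of the supermartingale inequality (on the $\Fcal_\sigma$-event $A$), arriving directly at $-\E[\bm 1_A Y_T]\le\|u\|_1$ without passing through $\E[Y_\sigma^-]$. Your inequality $\E[Y_\sigma^-]\le\E[Y_T^-]$ is also valid, but observe that it follows in one line from the very optional-sampling estimate $\E[Y_T\mid\Fcal_\sigma]\le Y_\sigma$ you already used, together with conditional Jensen for the convex map $x\mapsto x^-$; the dyadic-approximation sketch you give is unnecessary and, as written, does not address the event $\{\sigma_n>T\}$.
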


\begin{proof}
Pick any $u=(H\cdot S)_T \in U$, and write $Y = H\cdot S$ for simplicity. Nonnegativity of $f$ and the supermartingale property of $Y$ yield
\[
\E[fu] = \E\left[ f\, \E\left[ Y_T-Y_\sigma\mid\Fcal_\sigma\right]\right] + \E[f\, Y_\sigma] \le \E[f\, Y_\sigma].
\]
Since $X$ is independent of $Y_\sigma\in\Fcal_{\sigma-}$, we have $\E[f\, Y_\sigma]=M\varepsilon\E[\bm 1_{A^c}Y_\sigma]$. The supermartingale property of $Y$ finally yields
\[
\E[\bm 1_{A^c}Y_\sigma] = \E[ Y_\sigma ] + \E[\bm 1_A (Y_T-Y_\sigma)] - \E[\bm 1_A Y_T] \le - \E[\bm 1_A Y_T] \le \| Y_T \|_1 = \|u\|_1,
\]
whence $\E[fu] \le M\varepsilon \|u\|_1$ as claimed.

The statement regarding $v\in V$ follows exactly as in the proof of Lemma~\ref{L:foU+V}, where instead of~\eqref{eq:PS2|F1} one relies on the identities
\[
\begin{aligned}
\P(S_T=a\mid\Fcal_\sigma) &= \frac{a+S_\sigma}{2a} && \quad \text{on $\widetilde A$},\\
\P(S_T=b\mid\Fcal_\sigma) &= \frac{b+S_\sigma}{2b} && \quad \text{on $\widetilde A^c$},\\
\end{aligned}
\]
which are direct consequences of the martingale property of $S$ and the definition of~$T$.
\end{proof}

\bibliographystyle{alpha}
\bibliography{bibl}

\end{document}